\newcommand{\Sort}{\mathsf{sort}}
\newcommand{\Prefix}{\mathsf{Prefix}}
\newcommand{\Substr}{\mathsf{Substr}}
\newcommand{\Suffix}{\mathsf{Suffix}}
\newcommand{\rev}[1]{#1^R}
\newcommand{\lrp}{\mathsf{lrp}}
\newcommand{\lrs}{\mathsf{lrs}}
\newcommand{\STree}{\mathsf{STree}}
\newcommand{\BPSTree}{\mathsf{BPSTree}}
\newcommand{\BP}{\mathsf{BP}}
\newcommand{\str}{\mathsf{str}}
\newcommand{\RLPF}{\mathsf{RLPF}}
\newcommand{\LPF}{\mathsf{LPF}}
\newcommand{\LCF}{\mathsf{LCF}}
\newcommand{\LCFA}{\mathsf{LCFA}}
\newcommand{\mcf}{\mathsf{mcf}}
\newcommand{\MCFA}{\mathsf{MCFA}}
\newcommand{\ins}{\mathsf{insert}}
\newcommand{\delete}{\mathsf{delete}}
\newcommand{\update}{\mathsf{update}}
\newcommand{\RmQ}{\mathsf{RmQ}}
\newcommand{\tree}{\mathsf{T}}
\newcommand{\weight}{\mathsf{weight}}
\newtheorem{theorem}{Theorem}
\newtheorem{corollary}{Corollary}
\newtheorem{lemma}{Lemma}
\newtheorem{definition}{Definition}
\newtheorem{example}{Example}
\definecolor{fxtarget}{rgb}{0.0000,0.0000,0.4823}
\begin{document}
\title{Faster and simpler online/sliding rightmost Lempel-Ziv factorizations}
\author[1]{Wataru~Sumiyoshi}
\author[2]{Takuya~Mieno}
\author[1]{Shunsuke~Inenaga}

\affil[1]{Kyushu University, Japan}
\affil[ ]{\texttt{sumiyoshi.wataru.342@s.kyushu-u.ac.jp}}
\affil[ ]{\texttt{inenaga.shunsuke.380@m.kyushu-u.ac.jp}}
\affil[2]{University of Electro-Communications, Japan}
\affil[ ]{\texttt{tmieno@uec.ac.jp}}

\date{}
\maketitle

\begin{abstract}
  We tackle the problems of computing the \emph{rightmost} variant of
  the Lempel-Ziv factorizations in the online/sliding model.
  Previous best bounds for this problem are $O(n \log n)$ time
  with $O(n)$ space, due to Amir et al.~[IPL 2002] for the online model,
  and due to Larsson~[CPM 2014] for the sliding model.
  In this paper, we present faster $O(n \log n / \log \log n)$-time
  solutions to both of the online/sliding models.
  Our algorithms are built on a simple data structure
  named \emph{BP-linked trees},
  and on a slightly improved version of the range minimum/maximum query (RmQ/RMQ) data structure on a dynamic list of integers.
  We also present other applications of our algorithms.
\end{abstract}

\section{Introduction}

\subsection{Online rightmost LZ-factorizations and LPF arrays}

The \emph{longest previous factor array}\footnote{Our definition of online LPF arrays follows from the literature~\cite{OkanoharaS08,PrezzaR20}.} $\LPF$ of a string $S$ of length $n$
is an array of length $n$ such that, for each $1 \leq i \leq n$, $\LPF[i]$ stores the length $\ell_i$ of the longest suffix of $S[1..i]$ that occurs at least twice in $S[1..i]$.
The LPF array has a close relationship to the Lempel-Ziv (LZ) factorization~\cite{LempelZ76}, that is a basic and powerful tool for a variety of string processing tasks including data compression~\cite{LZ77} and finding repetitions~\cite{KolpakovK99}.

We consider a variant of LPF arrays with \emph{rightmost reference},
denoted $\RLPF$,
where each $\RLPF[i]$ also stores
the distance $d = i-j$ to the rightmost previous ending position $j$~($j < i$)
of the longest repeating length-$\ell_i$ suffix of $S[1..i]$.
Computing the rightmost references is motivated by encoding
each factor in the LZ-factorization with less bits~\cite{FerraginaNV13},
and has attracted much attention.
The state-of-the-art \emph{offline} algorithm for the rightmost LZ-factorization
runs in
$O(n(\log\log\sigma + \frac{\log \sigma}{\sqrt{\log n}}))$ time
with $O(n \log \sigma)$ bits of space, where $\sigma$ is the alphabet size~\cite{BelazzouguiP16}.
Bille et al.~\cite{BilleCFG17} proposed an algorithm for computing
a $(1+\epsilon)$-approximated version of the rightmost LZ-factorization for any $\epsilon > 0$.
Ellert et al.~\cite{EllertFP23} considered
the rightmost version of the \emph{LZ-End} factorization~\cite{KreftN10},
a variant of the LZ-factorization designed for fast random access.

The other common method for limiting the distance from each factor to a previous occurrence is the \emph{sliding} model,
where only the previous occurrences of each factor within the preceding sliding window of fixed size $d \geq 1$ are considered~\cite{StorerS82,Bell86}.
The LZ-factorization in the sliding model is used in the real-world compression software's including \texttt{zip} and \texttt{7zip}. 
Sliding suffix tree algorithms~\cite{Larsson96,Senft2005,leonard2024constanttime} are able to compute the LZ-factorization in the sliding model in
$O(n \log \sigma)$ time with $O(d)$ words of working space.
Bille et al.~\cite{BilleCFG17} presented
another algorithm for sliding LZ-factorization that runs in $O(\frac{n}{d} \Sort(d) + z \log \log \sigma)$ time with $O(d)$ words of working space,
where $z$ is the number of factors and 
$\Sort(d)$ denotes the time for sorting the $d$ characters
in each of the $O(\frac{n}{d})$ blocks on the input string.

In this paper, we consider the three following problems:
\begin{description}
  \item[Problem (1):] The rightmost LPF array in the online model.
  \item[Problem (2):] The rightmost LZ-factorization in the online model.
  \item[Problem (3):] The rightmost LZ-factorization in the sliding model.
\end{description}

Amir et al.~\cite{timestamped_suffixtree} proposed
an algorithm for (1) that works in $O(n \log n)$ time with $O(n)$ words of space.
Their key data structure is the \emph{timestamped suffix tree}, 
which is based on Weiner's online suffix tree construction~\cite{Weiner73}
and is augmented with an online range minimum query data structure.
Larsson~\cite{Larsson14} presented an algorithm
for (2) running in $O(n \log n)$ time with $O(n)$ words of space,
that is based on Ukkonen's online suffix tree construction~\cite{Ukkonen95}.
To the best of our knowledge,
none of the existing algorithms provides an efficient solution to (3),
where \emph{both} of the rightmost and sliding properties are required.

\subsection{Our new online/sliding algorithms for rightmost LZ and LPF}

We consider a simple data structure named \emph{BP-linked trees}
capable of maintaining a representation of balanced parentheses (BP) of a dynamic rooted tree.
{Basically, our BP-linked trees are equivalent to an intermediate
  data structure used in 
  the so-called \emph{Euler tour trees}~\cite{HenzingerK99} that maintain the Euler tours of dynamic trees:
  Our BP-linked trees can be seen as a representation of the Euler tours
  of the input trees.
}In our BP-linked tree, the BP is maintained as a doubly-linked list,
which can be updated in $O(1)$ worst-case time
given the locus of the inserted/deleted node on the explicitly stored tree.
By maintaining our BP-linked tree on top of the suffix tree,
we achieve an online algorithm for computing rightmost LPF arrays in 
$O(n \log n / \log \log n)$ time with $O(n)$ words of space,
thus achieving a faster online solution for (1).
In addition, we show how our algorithm can be modified to solve (2)
in the same complexity as (1),
and in $O(n \log d / \log \log d)$ time with $O(d)$ words of working space
for (3).

The $\log n / \log \log n$ (resp. $\log d / \log \log d$) term
in our time complexities comes from \emph{range minimum/maximum queries} (\emph{RmQ/RMQ})
on a dynamic list of $n$ integers (resp. $d$ integers) -
to compute the rightmost LZ-factorization and LPF array,
we use RmQ/RMQ to retrieve the rightmost previous occurrence
of a given locus in the online/sliding suffix tree.
While those bounds for dynamic RmQ/RMQ can already be achieved by the use of Brodal et al.'s \emph{path minimum/maximum queries} data structure on a dynamic tree~\cite{BrodalDR11} \emph{in the amortized sense}, this paper shows how their data structure can be modified to perform updates and queries in the same \emph{worst-case time bounds} in the case of dynamic lists, after sublinear-time preprocessing (Lemma~\ref{lem:dynamic_rmq}).

The simple framework of our algorithms allows one to obtain
very simple alternative solutions to the existing ones:
By using folklore dynamic RmQ/RMQ data structures based on binary search trees
in place of the aforementioned advanced RmQ/RMQ data structures,
the same run times as the methods of Amir et al.~\cite{timestamped_suffixtree}
for (1) and Larsson~\cite{Larsson14} for (2) can readily be achieved.
{It appears that this version of our BP-linked trees with binary search trees
  is basically equivalent to the so-called Euler tour trees~\cite{HenzingerK99}
  that support updates and queries on dynamic input trees in $O(\log n)$ time each.
}

We also present other applications of our algorithms in Section~\ref{sec:applications}.

\subsection{Related work for dynamic BP maintenance}

In the problem of maintaining the BP $\mathcal{B}$ for a \emph{dynamic} tree,
one is required to efficiently support the following operations and queries:
\begin{itemize}
  \item insert: add a new node to $\mathcal{B}$;
  \item delete: remove an existing non-root node from $\mathcal{B}$;
  \item leftmost leaf: return the left parenthesis ``$($'' corresponding to a given node;
  \item rightmost leaf: return the right parenthesis ``$)$'' corresponding to a given node;
  \item parent: return the nearest enclosing parentheses for a given node;
  \item rank $i$: return the number of left/right parentheses in $\mathcal{B}[1..i]$;
  \item select $i$: return the $i$th left/right parenthesis in $\mathcal{B}$.
\end{itemize}
This problem was already studied at least in early 80's,
in the context of maintaining a dynamic set of nesting intervals~\cite{GutingW82}.
Since then, it has also appeared in various important problems including dynamic dictionary matching~\cite{AmirFIPS95,ChanHLS07} and (compressed) suffix trees of dynamic collection of strings~\cite{AmirFIPS95,ChanHLS07,NavarroS14}.

Navarro and Sadakane~\cite{NavarroS14} proposed a data structure of $2n + o(n)$ bits of space that supports all the above queries and operations in worst-case $O(\log n / \log \log n)$ time.
Chan et al.~\cite{ChanHLS07} showed an amortized $\Omega(\log n / \log \log n)$-time lower bound for the dynamic BP-maintenance via a reduction from the dynamic subset rank problem on a set $\mathcal{S}$ of integers~\cite{FredmanS89}.
Chan et al. reduce a subset rank query on $\mathcal{S}$ to finding the nearest enclosing parentheses in $\mathcal{B}$ (i.e. finding the parent node),
which can further be reduced to a constant number of rank/select queries in $\mathcal{B}$.
Thus, any algorithm for dynamic BP-maintenance \emph{which supports rank/select queries} must use (amortized) $\Omega(\log n / \log \log n)$ time.

Our BP-linked trees
deal with a simpler version of the dynamic BP-maintenance problem where all the operations and queries, \emph{excluding rank and select queries}, are supported.
Our BP-linked trees are a simple pointer-based data structure,
which occupies $O(n)$ words of space and performs insertions, deletions,
accessing the leftmost/rightmost leaf, and the parent, in worst-case $O(1)$ time each.

\section{Preliminaries} \label{sec:preliminaries}

\subsection{Strings}
Let $\Sigma$ denote an ordered \emph{alphabet} of size $\sigma$.
An element of $\Sigma^*$ is called a \emph{string}.
The length of a string $S \in \Sigma^*$ is denoted by $|S|$.
The \emph{empty string} $\varepsilon$ is the string of length $0$.
For string $S = xyz$, $x$, $y$, and $z$ are called
the \emph{prefix}, \emph{substring}, and \emph{suffix} of $S$,
respectively.
Let $\Prefix(S)$, $\Substr(S)$, and $\Suffix(S)$ denote
the sets of prefixes, substrings, and suffixes of $S$, respectively.
For a string $S$ of length $n$, $S[i]$ denotes the $i$th symbol of $S$
and $S[i..j] = S[i] \cdots S[j]$ denotes the substring of $S$
that begins at position $i$ and ends at position $j$ for $1 \leq i \leq j \leq n$.
For convenience, let $S[i..j] = \varepsilon$ for $i > j$.
The \emph{reversed string} of a string $S$ is denoted by $\rev{S}$,
that is, $\rev{S} = S[|T|] \cdots S[1]$.

For a string $S$,
the strings in $\Prefix(S) \cap \Substr(S[2..|S|])$
and the strings in $\Suffix(S) \cap \Substr(S[1..|S|-1])$
are called \emph{repeating prefixes} and \emph{repeating suffixes}
of $S$, respectively.
Let $\lrp(S)$ and $\lrs(S)$ denote the longest repeating prefix
and the longest repeating suffix of $S$, respectively.

\subsection{Model of computation}
This paper assumes the standard \emph{word RAM model} with word size $\Theta(\log n)$, where $n$ is the length of the input string.

\subsection{Suffix trees}
The \emph{suffix tree}~\cite{Weiner73} of a string $S$,
denoted $\STree(S)$, is a path-compressed trie representing $\Suffix(S)$ such that
\begin{itemize}
  \item[(1)] Each internal node has at least two children;
  \item[(2)] Each edge is labeled by a non-empty substring of $S$;
  \item[(3)] The labels of out-going edges of the same node begin with distinct characters.
\end{itemize}
Each leaf of $\STree(S)$ is associated with the beginning position
of its corresponding suffix of $S$.
For a node $v$ of $\STree(S)$, let $\str(v)$ denote the string label
of the path from the root to $v$.
Each node $v$ stores its string depth $|\str(v)|$.
The \emph{locus} of a substring $w \in \Substr(S)$ in $\STree(S)$
is the position where $w$ is spelled out from the root.
The locus of $w$ is said to be an \emph{explicit node}
if $w = \str(v)$ for some node $v$ in $\STree(S)$.
Otherwise, i.e. the locus of $w$ is on an edge,
then it is said to be an \emph{implicit node}.
The number of explicit nodes in $\STree(S)$ is at most $n-1$,
where $n = |S|$,
while there are $O(n^2)$ implicit nodes in $\STree(S)$.
We can represent $\STree(S)$ in $O(n)$ space
by representing each edge label $x$ with a pair $(i,j)$
of positions in $S$ such that $S[i..j] = x$.

\subsection{Online/sliding rightmost LPF arrays and LZ-factorizations}
The \emph{online longest previous factors problem} is,
given the $i$th character $S[i]$ of an online input string $S$,
to compute the longest suffix $S[i-\ell_i+1..i]$ of $S[1..i]$
that occurs at least twice in $S[1..i]$.
The \emph{rightmost longest previous factor array} of a string $S$ of length $n$,
denoted $\RLPF$, is an array of length $n$
such that
\[
  \RLPF[i] =
  \begin{cases}
    (0,1) & \mbox{if $i$ is the first occurrence of character $S[i]$ in $S$}\\
    (\ell_i, i-j) & \mbox{otherwise},
  \end{cases}
\]
where $\ell_i = |\lrs(S[1..i])|$ and 
$j = \max\{j' \mid S[i-\ell_i+1..i] = S[j'-\ell_i+1..j'], j' < i\}$.

A sequence $S = f_1, \ldots, f_z$ of $z$ non-empty strings
is called the \emph{Lempel-Ziv} (\emph{LZ}) factorization of string $S$ of length $n$
if (1) $f_k$ is a fresh character not occurring to its left in $S$,
or (2) 
$f_k$ is the longest prefix of the suffix $f_k \cdots f_z = S[|f_1 \cdots f_{k-1}|+1..n]$ of $S$ that has a previous occurrence beginning in $f_1 \cdots f_{k-1} = S[1..|f_1 \cdots f_{k-1}|]$.
In the \emph{rightmost} LZ-factorization of $S$,
each factor $f_k$ of type (2) is encoded by
a pair $(|f_k|, x)$ such that
$x = |f_1 \cdots f_k|-j$ is the distance to the ending position $j$ of the rightmost previous occurrence of $f_k$ in $S[1..|f_1 \cdots f_k|]$.

\begin{example}
  The following table shows $\RLPF$ of string $S =\mathtt{abaababaabba}$:
  \begin{center}
    \begin{tabular}{lcccccccccccc} \hline
      $i$        & 1            & 2            & 3            & 4            & 5            & 6            & 7            & 8            & 9            & 10           & 11           & 12           \\\hline
      $S[i]$     & $\mathtt{a}$ & $\mathtt{b}$ & $\mathtt{a}$ & $\mathtt{a}$ & $\mathtt{b}$ & $\mathtt{a}$ & $\mathtt{b}$ & $\mathtt{a}$ & $\mathtt{a}$ & $\mathtt{b}$ & $\mathtt{b}$ & $\mathtt{a}$ \\
      $\RLPF[i]$ & (0,1)        & (0,1)        & (1,2)        & (1,1)        & (2,3)        & (3,3)        & (2,2)        & (3,2)        & (4,5)        & (5,5)        & (1,1)        & (2,4)        \\\hline
    \end{tabular}
  \end{center}
  The rightmost LZ-factorization of $S$ is
  $(0,\mathtt{a}),(0,\mathtt{b}),(1,2),(3,3),(4,5),(2,4)$.
\end{example}

Let $d \geq 1$ denote the window size of fixed length.
A sequence $S = g_1, \ldots, g_m$ of $m$ non-empty strings
is called the \emph{sliding LZ-factorization} of a string
$S$ of length $n$ w.r.t. window size $d$,
if each factor $g_k$ is the longest prefix of the suffix
$|g_k \cdots g_m| = S[|g_1 \cdots g_{k-1}|+1..n]$ of $S$
that has a previous occurrence beginning in the sliding window
$W_k = S[\max\{1, |g_1 \cdots g_{k-1}|-d+1\}.. |g_1 \cdots g_{k-1}|]$.

\section{Data structures}

This section introduces data structures for dynamic trees
which are core components of our rightmost LZ algorithms.

\subsection{BP-linked trees} \label{sec:dynamicBP}

Let $\tree$ be a rooted ordered tree having $N$ nodes.
Let $\BP(\tree) \in \{(,)\}^{2N}$ be the BP-representation of $\tree$.
In this paper, we implement $\BP(\tree)$ using a doubly-linked list.
For each node $v$ in $\tree$, let $(_v$ and $)_v$ denote the $($ and $)$
that correspond to $v$ in $\BP(\tree)$.
A \emph{BP-linked tree} is a tree $\tree$ augmented with its BP-representation $\BP(\tree)$
such that
each node $v$ of $\tree$ has pointers to $(_v$ and $)_v$ in $\BP(\tree)$.

We consider the following edit operations on $\tree$:
(1) inserting a leaf,
or a new root as the parent of the old root,
(2) inserting an internal node by splitting an edge, and
(3) deleting a non-root node.
We remark that our tree $\tree$ is explicitly stored,
and the input of each operation is given as a locus on the tree $\tree$
(not on $\BP(\tree)$).
The next lemma follows:
\begin{lemma} \label{lem:bptree}
  Given a tree-editing operation,
  we can update a BP-linked tree in worst-case $O(1)$ time.
\end{lemma}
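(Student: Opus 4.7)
The plan is to verify, for each of the three tree-editing operations, that the updates required on both the explicit tree $\tree$ and the doubly-linked list $\BP(\tree)$ reduce to a constant number of pointer manipulations. The key observation is that a doubly-linked list supports constant-time insertion and deletion once the relevant position is known, and in a BP-linked tree every node $v$ already holds pointers to $(_v$ and $)_v$, so positions in $\BP(\tree)$ are reachable in $O(1)$ from the locus of the operation. To achieve an analogous guarantee on the explicit tree itself, I would represent the children of each node as a doubly-linked list, so that insertions into and deletions from a child list also take $O(1)$ worst-case time, and I would maintain head and tail pointers on $\BP(\tree)$ to deal cleanly with the leftmost and rightmost positions.

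With this set-up, each operation becomes a short, fixed sequence of splices. For operation~(1), inserting a leaf $v$ as a child of $u$ between two consecutive existing children $w_1, w_2$ places the pair $(_v\,)_v$ immediately between $)_{w_1}$ and $(_{w_2}$ in $\BP(\tree)$, both of which are reachable in $O(1)$ through the parenthesis pointers of $w_1$ and $w_2$; the boundary cases (first or last child, or a new root above the old root) are handled using $u$'s own parentheses or the stored head/tail of the list. For operation~(2), splitting the edge between $u$ and its child $w$ by a new internal node $x$ amounts to wrapping the block of $\BP(\tree)$ belonging to $w$: insert $(_x$ immediately before $(_w$ and $)_x$ immediately after $)_w$, then replace $w$ by $x$ in $u$'s child list and make $w$ the only child of $x$. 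For operation~(3), deleting a non-root node $v$ splices $(_v$ and $)_v$ out of the list and splices $v$'s entire child list into the position that $v$ formerly occupied in its parent's child list; both actions are $O(1)$ and preserve left-to-right order, which is exactly what is needed for $\BP(\tree)$ to remain consistent.

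The only subtle point, and the one I would check most carefully, is that in every boundary case the locus really does furnish $O(1)$ access to all parenthesis and sibling pointers used by the splices (e.g.\ insertion as the first or only child, deletion of a node with no siblings, or creation of a new root that has to attach around the old root's entire parenthesis block). Once this bookkeeping is in place, no operation triggers more than a fixed number of pointer updates on either $\tree$ or $\BP(\tree)$, giving the claimed worst-case $O(1)$ time bound.
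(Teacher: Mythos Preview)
Your proposal is correct and follows essentially the same approach as the paper: for each of the three operations you locate the relevant parentheses via the stored node-to-parenthesis pointers and perform a constant number of doubly-linked-list splices, exactly as the paper does. You are somewhat more explicit than the paper about maintaining the child lists of the underlying tree $\tree$ and about the boundary cases, but the argument is the same.
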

\begin{proof}
  First we consider the case where a leaf $v$ is inserted.
  Let $u$ be the parent of $v$.
  If $v$ is the leftmost child of $u$,
  then we take the pointer of $u$ to access $(_u$ in $\BP(\tree)$,
  and then insert $(_v$ and $)_v$ immediately to the right of $(_u$.
  Otherwise, let $x$ be $v$'s neighbor to the left.
  Then, in a similar way as before, insert $(_v$ and $)_v$ immediately to the right of $)_x$.
  Also, when a new root $r$ is inserted, we just prepend $(_r$ and append $)_r$ to $\BP(\tree)$.

  Second we consider the case where an internal node $v$ is inserted.
  Suppose that an edge $e = (u,w)$ is split into two edges
  $e_1 = (u, v)$ and $e_2 = (v, w)$.
  We take the pointer of $w$ to access $(_w$ in $\BP(\tree)$,
  and insert $(_v$ immediately to the left of $(_w$.
  We also take the right pointer of $w$ to access $)_w$ in $\BP(\tree)$,
  and then insert $)_v$ immediately to the right of $)_w$.

  Third we consider the case where a non-root node $v$ is deleted.
  Then we just delete $(_v$ and $)_v$ from $\BP(\tree)$.
  Note that if $u$ is the parent of $v$ and $v$ has $k$ children $w_1, \ldots, w_k$,
  then new parent of $w_1, \ldots, w_k$ becomes $u$ after the deletion.

  It is clear that each of these operations takes $O(1)$ worst-case time.
\end{proof}

\subsection{Subtree minimum queries}

In this subsection,
we propose dynamic data structures with worst-case update/query time
for \emph{range minimum queries (RmQs)} on a linear list and
for \emph{subtree minimum queries (SmQs)} on a rooted and weighted tree.

\subsubsection{Dynamic range minimum queries.}
A \emph{dynamic range minimum query} (\emph{RmQ}) data structure on a linear-linked-list of integers supports the following:
\begin{itemize}
  \item $\ins(u, v, x)$: insert a new node $v$ with value $x$ as the next node of $u$;
  \item $\delete(v)$: delete node $v$ from the list;
  \item $\update(v, x)$: update the value of node $v$ to $x$;
  \item $\RmQ(u, v)$: return a node with the smallest value in the path $(u, v)$.
\end{itemize}

Brodal et al.~\cite{BrodalDR11} presented a dynamic RmQ data structure
for a linear-linked-list\footnote{They actually presented a data structure for \emph{Path Minimum Queries} for an edge-weighted dynamic tree,
  which is a generalization of RmQs for a dynamic linear list.
  Since such a general setting is not needed for our purpose,
  we cite their result as a dynamic RmQ data structure and make some changes to it for simplicity.
} of $n$ integers,
which takes $O(n)$ space and supports the above queries and updates in \emph{amortized} $O(\log n / \log \log n)$ time each
in the RAM model. Below we make a few changes to their method in order to obtain \emph{worst-case} time guarantees:

\begin{lemma} \label{lem:dynamic_rmq}
  After $o(n)$-time preprocessing,
  we can maintain a dynamic RmQ data structure on a linear-linked-list of $n$ integers
  which takes $O(n)$ space and supports each query/operation in worst-case $O(\log n / \log \log n)$ time.
\end{lemma}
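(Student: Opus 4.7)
The plan is to start from Brodal et al.'s block-decomposition structure, identify the specific sources of amortization in their analysis, and replace each one by a worst-case counterpart. The $o(n)$ preprocessing will be used exclusively to build Four-Russians-style lookup tables for word-RAM operations on microblocks of size $b = \Theta(\log n / \log\log n)$.

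Following their framework, I would partition the list into microblocks of size $\Theta(b)$ grouped into superblocks of size $\Theta(b^2)$, and index the superblocks by a balanced tree of height $O(\log n / \log\log n)$ whose internal nodes store the minimum of their subtree. An $\RmQ$ on a path of the list decomposes into $O(1)$ partial-block queries at each endpoint plus $O(\log n / \log\log n)$ full-subtree minima aggregated along the tree path, and an $\update$ propagates up the same path. Within a microblock, point updates refresh a Cartesian-tree signature and the stored minimum in $O(b)$ word-RAM operations, and within-block $\RmQ$'s are answered in $O(1)$ by lookup in a universal table indexed by such signatures; since there are only $2^{O(b)} = n^{o(1)}$ signatures, the table occupies $n^{o(1)}$ space and can be built in $o(n)$ time, which is exactly the preprocessing claimed in the lemma.

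In Brodal et al.'s analysis amortization arises in two places, and I would patch each as follows. (i) The rebalancing of the balanced tree after insertions/deletions is amortized against structural potential; I would replace the tree by a weight-balanced B-tree in the style of Arge and Vitter, whose rebalancing has \emph{worst-case} cost $O(\log n / \log\log n)$ per update for fanout $\Theta(\log^{\varepsilon} n)$. (ii) The global rebuild triggered when the list size changes by a constant factor and forces a new value of $b$ is amortized over $\Theta(n)$ operations; I would de-amortize this via standard double-buffering, keeping the current structure live while incrementally building a fresh one, advancing the fresh structure by $\Theta(1)$ microblocks per operation, and mirroring each incoming update onto both copies until the switch-over. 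Since the rebuild costs $O(n)$ and is triggered only every $\Theta(n)$ operations, the induced extra work per operation is $O(1)$.

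The main obstacle will be showing that, during a rebuild, mirroring an update still fits in the worst-case $O(\log n / \log\log n)$ budget: the two structures may use different values of $b$, and a mirrored update must not trigger a cascade of splits or merges on either copy. I would handle this by aligning the new block boundaries with the old ones so that a mirrored update hits a single microblock and a single superblock in each copy, relying on the weight-balanced-B-tree bound at each copy to absorb the propagation; the total work is then twice the single-copy bound, which remains $O(\log n / \log\log n)$.
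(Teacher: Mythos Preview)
Your de-amortization plan is reasonable in spirit but differs from the paper and carries one real gap.

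\textbf{What the paper does.} The paper sidesteps de-amortization entirely: it replaces the whole skeleton by Willard's q*-heap (a B-tree of order $B=\lfloor\log^\varepsilon n\rfloor$) which already supports insert, delete, and navigation in \emph{worst-case} $O(\log n/\log\log n)$ time after $o(n)$-time table construction. On top of this it reuses Brodal et al.'s $o(n)$-size lookup tables to answer RmQ inside each node (a list of size $O(B)$) in $O(1)$, and stores at every node the minima of its children. A query then touches $O(\log n/\log\log n)$ nodes and is done. There is no global rebuild, no double buffering, and no separate microblock/superblock layer: the q*-heap \emph{is} the block decomposition and comes worst-case out of the box. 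Note also that since the lemma assumes $n$ is known (it states $o(n)$ preprocessing), your whole item (ii) is addressing a non-issue in this setting.

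\textbf{The gap in your item (i).} Weight-balanced B-trees in the style of Arge and Vitter do \emph{not} by themselves give worst-case $O(\log n/\log\log n)$ rebalancing with fanout $f=\Theta(\log^\varepsilon n)$. Their guarantee is that a node at level $i$ is split only once per $\Theta(f^i)$ updates, which yields an \emph{amortized} $O(1)$ splits per update; in the worst case a single insert can still trigger a split at every level, and each split naively costs $\Theta(f)$ to rebuild the child array and the per-node minimum structure, for $\Theta(f\cdot\log_f n)=\Theta(\log^{1+\varepsilon} n/\log\log n)$ worst-case. To make this worst-case you would need a further mechanism (e.g.\ incremental splitting/merging spread over the $\Theta(f^i)$ updates that precede each split), which you do not describe. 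The paper's use of the q*-heap is exactly the device that buys this worst-case bound for free.

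A minor point: with $b=\Theta(\log n/\log\log n)$ you already pay $O(b)=O(\log n/\log\log n)$ to rescan a microblock on an update, so the Four-Russians table for in-block queries is not actually saving anything over scanning; this does not affect correctness, but it shows the parameter choice is doing less work than you suggest.
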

\begin{proof}
  Let $L$ be the dynamic list of integers.
  Let $B = \lfloor \log^\varepsilon n\rfloor \ge 1$ for some small constant $0 < \varepsilon < 1$.
  We build a \emph{q*-heap}~(Corollary~3.4 of~\cite{Willard00}) on top of the dynamic list $L$,
  which is a variant of B-trees of order $B$ and supports predecessor queries, insertions, and deletions over $L$
  in \emph{worst-case} $O(\log n / \log \log n)$ time each, after $o(n)$-time preprocessing.
  Note that updating a value of an element in $L$ can be simulated by combining an insertion and a deletion.
  Also, 
  as in {Theorem~2 of}~\cite{BrodalDR11}, we precompute lookup-tables of total size $o(n)$
  in order to support $\RmQ$, $\ins$, $\delete$ and $\update$ inside any list of size $O(B)$,  which represents a node of the q*-heap, in worst-case $O(1)$ time
  {in the RAM model.}
  Then we maintain, for each node of the q*-heap, the list consisting of the minima of its children by using the lookup-tables.
  Given a range minimum query, we can answer the query by visiting at most $O(\log n / \log \log n)$ nodes of the q*-heap,
  similar to the standard method for 1D-range trees~(see~\cite{MBCT2023} for example).
\end{proof}

\subsubsection{Dynamic subtree minimum queries.}

We introduce {subtree minimum queries (SmQs)}
on a rooted and weighted tree.
\begin{definition}
  A subtree minimum query (SmQ) on a rooted and weighted tree $\tree$ is, given a node $v$ in $\tree$,
  to compute a node having the minimum weight in the subtree rooted at $v$.
\end{definition}
For the static case,
we can easily answer any query in constant time
after storing the answer to each node by traversing the tree.

We focus on a dynamic case,
where tree-editing operation mentioned in Section~\ref{sec:dynamicBP} will be applied to the tree.
Furthermore, we consider update operations, i.e.,
updating the weight of a node to a new weight.
We show the next lemma.

\begin{lemma} \label{lem:smq}
  After $o(n)$-time preprocessing,
  we can maintain a dynamic SmQ data structure on a rooted and weighted tree with $n$ nodes
  which takes $O(n)$ space and supports each query/operation in worst-case $O(\log n / \log \log n)$ time.
  Also, the time complexity per each query/operation is optimal.
\end{lemma}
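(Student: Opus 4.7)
The plan is to reduce dynamic SmQ on $\tree$ to dynamic RmQ on a linear list by layering a BP-linked tree (Lemma~\ref{lem:bptree}) on top of $\tree$ and the dynamic RmQ data structure of Lemma~\ref{lem:dynamic_rmq} on top of the doubly-linked list that stores $\BP(\tree)$. The key structural observation is that, in $\BP(\tree)$, both parentheses $(_u$ and $)_u$ of every descendant $u$ of a node $v$ lie strictly between $(_v$ and $)_v$, and no other parenthesis lies in this range. Hence the minimum-weight node in the subtree rooted at $v$ is precisely the node whose left parenthesis attains the minimum weight in that range of the BP list.

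Concretely, I would associate with each position of the BP list an integer value as follows: at the position of $(_v$ store $\weight(v)$, and at the position of $)_v$ store $+\infty$. An SmQ on a node $v$ is answered by following $v$'s pointers to reach $(_v$ and $)_v$ in $O(1)$ time, and then invoking $\RmQ((_v, )_v)$, which by Lemma~\ref{lem:dynamic_rmq} returns the answer in worst-case $O(\log n / \log \log n)$ time. Every tree-editing operation on $\tree$ is first performed on the BP-linked tree in worst-case $O(1)$ time by Lemma~\ref{lem:bptree}; this produces only $O(1)$ insertions or deletions on the underlying list, which are forwarded to the RmQ structure as $\ins$ or $\delete$, each costing worst-case $O(\log n / \log \log n)$ time. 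A weight update $\weight(v) \gets x$ is forwarded directly as $\update((_v), x)$. Thus every query and update runs in worst-case $O(\log n / \log \log n)$ time, the total space is $O(n)$, and the $o(n)$-time preprocessing is inherited from Lemma~\ref{lem:dynamic_rmq}.

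For the optimality claim, take $\tree$ to be a path $r_1, r_2, \ldots, r_n$ in which each $r_i$ is the parent of $r_{i+1}$ and the root is $r_1$. Then the subtree of $r_i$ equals $\{r_i, r_{i+1}, \ldots, r_n\}$, so an SmQ at $r_i$ is exactly the suffix minimum of the weight sequence starting at position $i$. Supporting insertions, deletions, and value updates together with such suffix-minimum queries on a dynamic list of $n$ integers is at least as hard as the dynamic partial-sums problem, which has an amortized $\Omega(\log n / \log \log n)$ cell-probe lower bound~\cite{FredmanS89} in the RAM model. This matches our worst-case upper bound, proving optimality.

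The main obstacle I anticipate is checking that an insertion or deletion in $\tree$ --- which can change \emph{which} node occupies a given position of the BP list --- is consistently reflected by only a constant number of list edits plus at most two forwarded operations to the RmQ structure, without corrupting the correspondence between left parentheses and node weights. This is precisely what the BP-linked tree of Lemma~\ref{lem:bptree} guarantees through its explicit node-to-parenthesis pointers, so once that bookkeeping is stated carefully the reduction goes through cleanly.
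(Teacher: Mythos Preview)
Your upper bound argument is correct and essentially identical to the paper's: reduce SmQ on $\tree$ to RmQ on $\BP(\tree)$ by storing $\weight(v)$ at $(_v$ and $+\infty$ at $)_v$, maintain the BP list via Lemma~\ref{lem:bptree}, and answer/update via Lemma~\ref{lem:dynamic_rmq}.

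The lower bound argument, however, has a real gap. You assert that suffix-minimum queries on a dynamic list are ``at least as hard as the dynamic partial-sums problem''~\cite{FredmanS89}, but you give no reduction, and none is apparent: partial sums are additive while minima are not, and an increment $A[i]\mathrel{+}=\delta$ cannot be simulated by a constant number of assignment updates in a suffix-min structure (encoding via prefix sums would require updating $\Theta(n)$ positions). So the citation of~\cite{FredmanS89} does not establish the bound you need.

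The paper obtains optimality by reducing a different problem, the \emph{priority searching} problem of Alstrup, Husfeldt, and Rauhe~\cite{AlstrupHR98}, to SmQ on a path. That problem maintains $S\subseteq\{1,\ldots,n\}$ with integer priorities and asks, given $x$, for the element $y\le x$ of maximum priority; it has an amortized $\Omega(\log n/\log\log n)$ cell-probe lower bound. Representing the elements as a path with weights equal to priorities (equivalently, a length-$n$ array with weight $-p(i)$ at positions $i\in S$ and $+\infty$ elsewhere) turns each priority-search query into a single prefix-minimum query, which is a subtree query on the path. This reduction is clean and yields the matching lower bound. If you want to keep your path setup, swap out partial sums for priority searching and spell out this reduction.
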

\begin{proof}
  Let $\tree$ be the input tree.
  Further let $\weight(v)$ be the weight of $v$ for each node $v$ in $\tree$.
  The SmQs on $\tree$ can be reduced to the RmQs on $\BP(\tree)$ as follows:
  For each node $v$ of $\tree$, the weight of "$(_v$" is assigned $\weight(v)$ and
  the weight of "$)_v$" is assigned $\infty$.
  By doing this reduction, it follows that
  for any node $v$ in $\tree$,
  if RmQ for pair "$(_v$", "$)_v$" returns "$(_u$", then
  node $u$ is an answer of SmQ for $v$.
  Since we can maintain $\tree$ as a BP-linked tree for any given tree-editing operation in $O(1)$ time~(Lemma~\ref{lem:bptree}),
  we can maintain the $\BP(\tree)$ with weights in $O(1)$ time as well.
  Also, by Lemma~\ref{lem:dynamic_rmq},
  the RmQ data structure on $\BP(\tree)$ can be maintained in
  worst-case $O(\log n / \log \log n)$ time for each query/editing operation.
  Therefore, we obtain the desired upper bound.

  To prove the lower bound, we reduce the \emph{priority searching problem}~\cite{AlstrupHR98} to the dynamic SmQ problem.
  Let $S \subseteq \{1, \ldots, n\}$ be a set of integers with priorities.
  A priority $p(x)$ of an integer $x$ is a positive integer at most $n$.
  The priority searching problem on $S$ supports
  (1) insertion of an integer $x$ with priority $p(x)$ to $S$,
  (2) deletion of an integer $x$ from $S$, and
  (3) searching for the integer $y \le x$ in $P$ for given $x$ such that $p(y)$ is maximized.
  For any instance $S$ of the priority searching problem, we can consider the path graph $G_S$ of size $|S|$ obtained
  by connecting the elements in $S$ linearly.
  The weight of each element is the priority of the element.
  Clearly, any query/update of the priority searching on $S$ can be simulated by
  a query/update of the dynamic SmQ on $G_S$.
\end{proof}

\section{Online/sliding rightmost LZ factorizations}

In this section, we present our algorithms for Problems (1)-(3).
We begin with our key data structure.

\subsection{BP-linked suffix trees}

We call the suffix tree of string $S$ augmented with its BP-representation
a \emph{BP-linked suffix tree}
and denote it by $\BPSTree(S)$. 
See Fig.~\ref{fig:BPSTree} for a concrete example of $\BPSTree(S)$.
Note that the BP-linked suffix tree is similar to
the \emph{timestamped suffix tree} proposed by Amir et al.~\cite{timestamped_suffixtree}.
However, the BP-linked suffix tree is superior to the timestamped suffix tree
in the following sense:
Our BP-linked suffix trees support a node deletion in worst-case $O(1)$ time,
while the timestamped suffix trees can require $\Omega(n)$ time for a node deletion in the worst case to maintain their \emph{rightmost/leftmost leaves pointers for all nodes}.

\begin{figure}[t]
  \centering  
  \includegraphics[width=0.9\linewidth]{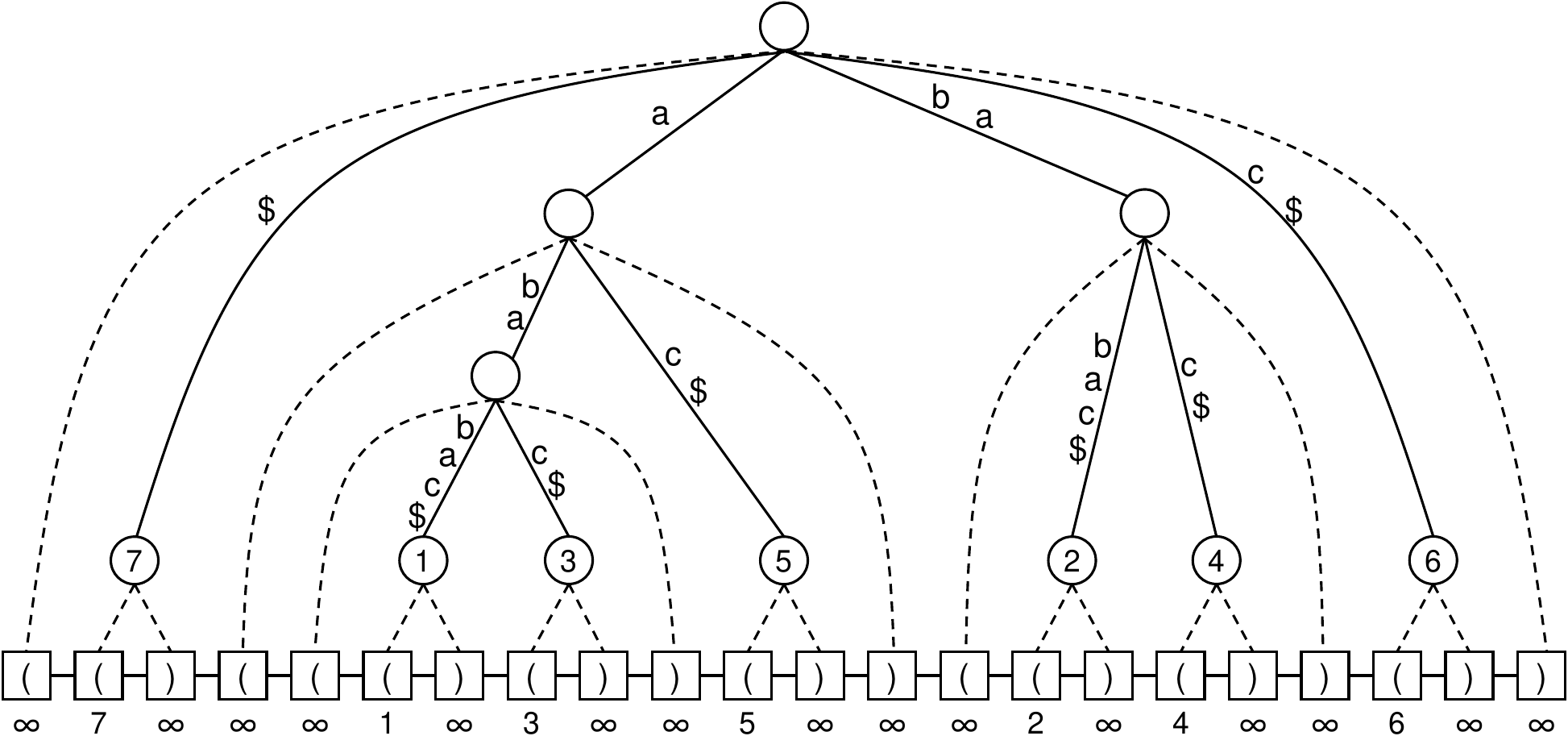}
  \caption{The BP-linked suffix tree of string $S = \mathtt{ababac\$}$.
  }\label{fig:BPSTree}
\end{figure}

By combining Lemma~\ref{lem:bptree} with the known
online suffix tree construction algorithms,
we immediately obtain the following results:

\begin{corollary} \label{coro:Weiner-type}
  For a string $S$ of length $n$,
  using $O(n)$ working space,
  one can update $\BPSTree(S)$ to $\BPSTree(cS)$
  and find the locus of $\lrp(cS)$ in $\BPSTree(cS)$
  for a given character $c \in \Sigma$
  \begin{enumerate}
    \item[(a)] in worst-case $O(\log\log n + (\log \log \sigma)^2 / \log \log \log \sigma)$ time for an integer alphabet of size $\sigma = n^{O(1)}$ with Fischer and Gawrychowski's algorithm~\cite{FischerG15,FischerG15_arxiv};
    \item[(b)] in amortized $O(\log \sigma)$ time for a general ordered alphabet of size $\sigma$ with Weiner's algorithm~\cite{Weiner73}.
  \end{enumerate}  
\end{corollary}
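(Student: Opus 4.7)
The plan is to run a standard Weiner-type online suffix tree construction on top of the BP-linked tree machinery: such algorithms already process the input from right to left, so that adding a new character corresponds exactly to the prepending operation $\STree(S) \to \STree(cS)$, and Lemma~\ref{lem:bptree} will let us propagate the induced structural changes to $\BP(\tree)$ at constant additional cost per edit.

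First I would argue that each Weiner extension step triggers only a constant number of structural edits on the tree. Concretely, going from $\STree(S)$ to $\STree(cS)$ amounts to (i) locating the point $p$ at which the new longest suffix $cS$ first departs from the current tree, (ii) if $p$ falls in the middle of an edge, splitting that edge so that $p$ becomes an explicit internal node, and (iii) attaching a single new leaf for the suffix $cS$ itself. All the remaining work in both Weiner's original algorithm and Fischer--Gawrychowski's refinement consists of walking suffix/Weiner links and updating their pointers, which does not create or destroy tree nodes. Hence at most two of the edit operations listed in Section~\ref{sec:dynamicBP}—one internal-node insertion via edge split and one leaf insertion—are invoked per character, and the locus of each is already computed by the underlying construction before being applied.

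Next I would observe that the locus of $\lrp(cS)$ is exactly the point $p$ above: by definition, the string read from the root to $p$ is the longest prefix of $cS$ that already occurs as a substring of $S$, and that prefix is precisely $\lrp(cS)$. So the required locus is returned as a by-product of the extension step, with no additional work beyond what the chosen construction algorithm already performs.

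Putting these pieces together, Lemma~\ref{lem:bptree} handles each of the $O(1)$ edits per character in worst-case $O(1)$ time, and the cost of finding the locus is absorbed into the construction step; hence the BP-linked augmentation adds only a constant overhead per prepended character. Bound (a) then follows by plugging in Fischer and Gawrychowski's worst-case $O(\log\log n + (\log\log\sigma)^2 / \log\log\log\sigma)$-time Weiner-type extension for integer alphabets of size $\sigma = n^{O(1)}$~\cite{FischerG15,FischerG15_arxiv}, and bound (b) follows by plugging in Weiner's amortized $O(\log\sigma)$-time extension for general ordered alphabets~\cite{Weiner73}. The only point that really needs to be checked is the uniform ``$O(1)$ structural edits per step'' property of Weiner-type algorithms; once that is in hand, the remainder is routine bookkeeping on top of Lemma~\ref{lem:bptree}.
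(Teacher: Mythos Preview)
Your proposal is correct and matches the paper's reasoning: the paper presents Corollary~\ref{coro:Weiner-type} as an immediate consequence of combining Lemma~\ref{lem:bptree} with the cited Weiner-type constructions, without a separate proof, and your argument simply makes explicit the $O(1)$-structural-edits-per-step property and the identification of the branching point with $\lrp(cS)$ that justify the word ``immediately''.
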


\begin{corollary} \label{coro:Ukkonen}
  For a string $S$ of length $n$ over a general ordered alphabet of size $\sigma$,
  using $O(n)$ working space,
  one can update $\BPSTree(S)$ to $\BPSTree(Sc)$ and find
  the locus of $\lrs(Sc)$ in $\BPSTree(Sc)$ for a given character $c \in \Sigma$
  in amortized $O(\log \sigma)$ time
  with Ukkonen's algorithm~\cite{Ukkonen95}.
\end{corollary}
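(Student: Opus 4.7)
The plan is to run Ukkonen's online right-extension algorithm on $S$ and, in lockstep, maintain the BP-representation of the suffix tree using the editing rules of Lemma~\ref{lem:bptree}. I would first recall that when appending $c$ to $S$, Ukkonen performs only two kinds of tree edits along the suffix-link chain starting from the previous active point: (i) creating a new leaf hanging off an existing (possibly newly created) internal node, and (ii) splitting an edge by inserting a new internal branching node. These are precisely cases (1) and (2) of Section~\ref{sec:dynamicBP}; in particular, no node is ever deleted. Since the locus of each edit is supplied by Ukkonen as an explicit pointer into $\tree$, Lemma~\ref{lem:bptree} applies and each edit costs worst-case $O(1)$ time on $\BP(\tree)$.

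The next step is to identify the locus of $\lrs(Sc)$ with Ukkonen's \emph{active point} at the end of the phase for $c$. By construction, Ukkonen stops extending as soon as he reaches a suffix of $Sc$ that is already present as a substring of $Sc$; the remaining (strictly shorter) suffixes are implicitly represented by the subtree below the active point. Hence the active point is exactly the locus of the longest suffix of $Sc$ that has an earlier occurrence in $Sc$, which is by definition the locus of $\lrs(Sc)$. No additional traversal is needed: we simply return this locus.

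For the complexity, I would invoke the standard amortized analysis of Ukkonen: over all characters of $S$, the total work for suffix-link walks, fast-scans, and branching child lookups is $O(n \log \sigma)$ for a general ordered alphabet when the children at each node are stored in a balanced BST, and the total number of tree edits is $O(n)$. Lemma~\ref{lem:bptree} adds only $O(1)$ worst-case time per edit, so the overall cost stays amortized $O(\log \sigma)$ per appended character within $O(n)$ working space. The one point requiring care, which I expect to be the only real obstacle, is to verify that every tree edit and every query of the active point that Ukkonen issues is indeed phrased in terms of an explicit locus on $\tree$, so that the $O(1)$-time BP update of Lemma~\ref{lem:bptree} is actually applicable; this follows from Ukkonen's original description, where the active point and the node to be edited are always maintained by explicit pointers.
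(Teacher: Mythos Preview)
Your proposal is correct and follows exactly the approach the paper intends: the paper presents this corollary without a separate proof, simply as an immediate consequence of combining Lemma~\ref{lem:bptree} with Ukkonen's algorithm, and your write-up spells out precisely that combination (Ukkonen's edits are leaf insertions and edge splits handled in $O(1)$ by Lemma~\ref{lem:bptree}, and the active point is the locus of $\lrs(Sc)$). There is nothing to add or correct.
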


Also, we employ our dynamic SmQ data structure~(Lemma~\ref{lem:smq}) to the BP-linked suffix trees.
This gives us the following:

\begin{lemma} \label{lem:BP_rmq}
  For an online string of length $n$,
  there exists a data structure of size $O(n)$ which supports,
  \begin{enumerate}
    \item[(a)] in worst-case $O(\log n/ \log \log n)$ time for an integer alphabet of size $\sigma = n^{O(1)}$ after $o(n)$-time preprocessing;
    \item[(b)] in amortized $O(\log \sigma + \log n/ \log \log n)$ time for a general ordered alphabet of size $\sigma$,
  \end{enumerate}
  the following queries and updates:
  \begin{itemize}
    \item Given an implicit or explicit node $v$ on the current suffix tree,
      find the leftmost occurrence of $\str(v)$ in the current string;
    \item Update the data structure when a new character is prepended.
  \end{itemize}
\end{lemma}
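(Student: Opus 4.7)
The plan is to stack the dynamic SmQ data structure of Lemma~\ref{lem:smq} on top of the BP-linked suffix tree provided by Corollary~\ref{coro:Weiner-type}. I would maintain $\BPSTree(S)$ for the current string $S$ and, on that same tree, a dynamic SmQ whose weight at a leaf $\ell$ is $-t_\ell$, where $t_\ell$ denotes the length of $S$ at the moment $\ell$ was created. The current starting position of the suffix at $\ell$ is then $|S|-t_\ell+1$, so the leaf with minimum weight in a subtree is precisely the leftmost current occurrence among the suffixes under that subtree. Internal nodes would get weight $+\infty$ so they never win a query.

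To carry out an update with a new character $c$, I would first apply Corollary~\ref{coro:Weiner-type} to turn $\BPSTree(S)$ into $\BPSTree(cS)$, which performs only $O(1)$ tree edits (at most one new leaf for the new suffix and at most one edge split). Each edit is mirrored into the dynamic SmQ, giving the new leaf weight $-|cS|$ and any new internal node weight $+\infty$. By Lemma~\ref{lem:bptree} each edit is absorbed into the BP in $O(1)$, and by Lemma~\ref{lem:smq} each edit on the SmQ side costs worst-case $O(\log n/\log\log n)$ after the inherited $o(n)$-time preprocessing. Adding the cost of Corollary~\ref{coro:Weiner-type} and using that $(\log\log\sigma)^2/\log\log\log\sigma = O(\log n/\log\log n)$ when $\sigma = n^{O(1)}$, case (a) collapses to worst-case $O(\log n/\log\log n)$, and case (b) becomes amortized $O(\log\sigma+\log n/\log\log n)$.

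For a query at a locus $v$, I would locate the closest explicit descendant $u'$ in $O(1)$ --- this is $v$ itself if explicit, or the lower endpoint of the edge carrying $v$ if implicit --- and issue one SmQ on the subtree rooted at $u'$. The leaves of that subtree are exactly those whose suffixes have $\str(v)$ as a prefix, so the returned leaf $\ell^\star$ witnesses the leftmost occurrence of $\str(v)$ in $S$, and we report the starting position $|S|-t_{\ell^\star}+1$. The cost is dominated by the SmQ, i.e., $O(\log n/\log\log n)$.

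The main obstacle is that prepending a character shifts every existing starting position by one, and we cannot afford to rewrite every leaf weight at every step. The timestamp trick sidesteps this because the relative order of starting positions coincides with the reverse of creation order: once a leaf's weight is fixed it is never modified, and the dynamic SmQ of Lemma~\ref{lem:smq} transports everything else in the advertised bounds.
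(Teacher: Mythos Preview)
Your proof is correct and follows essentially the same approach as the paper: both build the dynamic SmQ of Lemma~\ref{lem:smq} on top of the Weiner-type BP-linked suffix tree of Corollary~\ref{coro:Weiner-type}, with leaves weighted so that the subtree minimum yields the leftmost occurrence and internal nodes set to $+\infty$. Your timestamp weighting $-t_\ell$ is order-equivalent to the paper's ``beginning position'' weight (and neatly sidesteps the shift issue you flag), and your explicit handling of implicit loci via the lower edge endpoint is exactly what is needed; the paper additionally assumes a terminal $\$$ so that every suffix is a leaf, but your argument is already correct without it since the leftmost occurrence of $\str(v)$ is always witnessed by a leaf.
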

\begin{proof}
  Let $S$ be the input string.
  Since we use a Weiner-type of construction where a new character $c$ is prepended to $S$,
  we can assume that the right-end of $S$ terminates with a end-maker $\$$,
  with which all the suffixes of $S$ are represented by the leaves of $\STree(S)$.

  We consider Case (a).
  Let $\weight(v)$ be the weight of $v$ for each node $v$ in $\STree(S)$.
  For each leaf $\ell$, we set $\weight(\ell)$ to the beginning position of the suffix corresponding to $\ell$.
  For each non-leaf node $v$, we set $\weight(v) = \infty$.
  By applying Lemma~\ref{lem:smq} to this weighted suffix tree,
  we can answer the query in $O(\log n / \log \log n)$ time.
  Also, the auxiliary data structures can be updated in worst-case $O(\log n / \log \log n)$ time
  by Corollary~\ref{coro:Weiner-type}-(a) and Lemma~\ref{lem:smq}.

  Case (b) can be proven similarly with Corollary~\ref{coro:Weiner-type}-(b).
\end{proof}

\subsection{Online rightmost LPF}

Here we present our algorithm for Problem (1).

\begin{theorem}[Online rightmost LPF] \label{theo:LPF_online}
  For a string $S$ of length $n$,
  there exist online algorithms which use $O(n)$ space and compute $\RLPF[i]$ for each $1 \leq i \leq n$
  \begin{itemize}
    \item[(a)] in worst-case $O(\log n / \log \log n)$ time after $o(n)$-time preprocessing for an integer alphabet of size $\sigma = n^{O(1)}$;
    \item[(b)] in amortized $O(\log \sigma + \log n / \log \log n)$ time for a general order alphabet of size $\sigma$.
  \end{itemize}
\end{theorem}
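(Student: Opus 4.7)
The plan is to maintain the BP-linked suffix tree $\BPSTree(\rev{S[1..i]})$ as $i$ grows from $1$ to $n$ using a Weiner-type prepending update (Corollary~\ref{coro:Weiner-type}), and to augment it with a dynamic subtree-maximum query structure obtained from Lemma~\ref{lem:smq} (by negating weights if one wants to stay literally within its MIN formulation). The key observation is that $\lrp(\rev{S[1..i]}) = \rev{\lrs(S[1..i])}$, and the ending positions $j$ in $S[1..i]$ of occurrences of $\lrs(S[1..i])$ are in bijection with the starting positions $p$ of $\rev{\lrs(S[1..i])}$ in $\rev{S[1..i]}$ via $j = i - p + 1$. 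Hence the rightmost previous ending position $j < i$ corresponds to the smallest $p > 1$ in $\rev{S[1..i]}$, which is attained by some leaf in the subtree of the locus of $\lrp(\rev{S[1..i]})$.

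To obtain a weighting that is stable under prepending, I would label each leaf $\ell$ by the step $t$ at which it was created; by the bijection above, $t$ coincides with the ending position in $S$ of every occurrence realized by $\ell$. Setting $\weight(\ell) = t$ on leaves and $\weight(v) = -\infty$ on internal nodes, a subtree-maximum query at the locus $v$ of $\lrp(\rev{S[1..i]})$ then returns exactly the desired rightmost previous ending position $j$, provided the freshly inserted leaf (whose label $i$ would otherwise win) is not counted.

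The per-step procedure at step $i$ is then: (i) invoke Corollary~\ref{coro:Weiner-type} to transform $\BPSTree(\rev{S[1..i-1]})$ into $\BPSTree(\rev{S[1..i]})$ and obtain the locus $v$ of $\lrp(\rev{S[1..i]})$, inserting the new leaf with sentinel weight $-\infty$; (ii) if $v$ is the root (so $S[i]$ is a fresh character), output $\RLPF[i] = (0,1)$; (iii) otherwise set $\ell_i = |\str(v)|$, issue a subtree-maximum query at $v$ to obtain $j$, and output $\RLPF[i] = (\ell_i, i - j)$; (iv) finally promote the new leaf's weight from $-\infty$ to $i$ so that it is available in subsequent steps.

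The hard part will be the exclusion of the newly inserted leaf from the subtree query, since its label $i$ would trivially dominate the maximum but does not correspond to a previous occurrence; the sentinel-weight trick in step (i) handles this cleanly, and correctness then follows because the remaining leaves in the subtree of $v$ are in one-to-one correspondence with the previous occurrences of $\lrs(S[1..i])$ in $S[1..i]$. For the running time, step (i) costs what Corollary~\ref{coro:Weiner-type} guarantees together with the $O(1)$ BP maintenance of Lemma~\ref{lem:bptree}, and steps (iii)--(iv) perform $O(1)$ dynamic SmQ operations, each running in $O(\log n / \log \log n)$ time by Lemma~\ref{lem:smq} after $o(n)$-time preprocessing. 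Summing, we obtain $O(\log n / \log \log n)$ worst-case per step in case (a) (where the Weiner cost $O(\log \log n + (\log\log\sigma)^2/\log\log\log\sigma)$ is absorbed once $\sigma = n^{O(1)}$), and $O(\log \sigma + \log n / \log \log n)$ amortized per step in case (b), which matches the stated bounds.
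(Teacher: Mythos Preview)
Your proposal is correct and follows essentially the same approach as the paper: build $\BPSTree(\rev{S[1..i]})$ by Weiner-type prepending (Corollary~\ref{coro:Weiner-type}), locate $\lrp(\rev{S[1..i]})$, and issue a subtree extremum query (Lemma~\ref{lem:smq}) beneath that locus. The only cosmetic difference is a dual weighting: the paper assigns each leaf its starting position in the reversed string and asks for the \emph{second smallest} value in the subtree, whereas you assign each leaf its creation step~$t$ (which is stable under prepending and equals the ending position in $S$) and take the \emph{maximum} after masking the new leaf with a sentinel---your sentinel trick in fact spells out how the paper's ``second smallest'' can be realized with $O(1)$ queries.
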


\begin{proof}
  Let us consider Case (a).
  Since $\lrs(S[1..i]) = \lrp(\rev{(S[1..i])}) = \lrp(\rev{S}[n-i+1..n])$,
  the problem is reducible to computing
  the locus $p_j$ of $\lrp(\rev{S}[j..n])$ on $\STree(\rev{S}[j..n])$ for decreasing $j = n, \ldots, 1$,
  and finding the leaf in the subtree under $p_j$ that has the second smallest value.
  For this sake we can use (1) of Corollary~\ref{coro:Weiner-type} and Lemma~\ref{lem:BP_rmq}.
  Since $\sigma = n^{O(1)}$, we have $\log \log n + (\log \log \sigma)^2 / \log \log \log \sigma \in O(\log n / \log \log n)$.
  Thus $\RLPF[i]$ can be computed in worst-case $O(\log n / \log \log n)$ time each,
  after $o(n)$-time preprocessing. Case (b) can be shown similarly.
\end{proof}

\subsection{Online rightmost LZ-factorization}

In this subsection, we present our algorithm for Problem (2).

\begin{theorem}[Online rightmost LZ] \label{theo:online_LZ}
  For a string $S$ of length $n$ over a general order alphabet of size $\sigma$,
  there exists an online algorithm which uses $O(n)$ space and computes the rightmost LZ-factorization of $S$ 
  in amortized $O(\log \sigma + \log n / \log \log n)$ time per character.
\end{theorem}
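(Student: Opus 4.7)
The plan is to adapt Theorem~\ref{theo:LPF_online} to the LZ setting by running Ukkonen's online construction on $S$ itself (Corollary~\ref{coro:Ukkonen}) rather than Weiner's on $\rev{S}$, and by maintaining our own locus of the current factor on top of $\BPSTree(S[1..i])$. The BP-linked suffix tree is augmented with the dynamic SmQ structure of Lemma~\ref{lem:smq} (with a parallel copy carrying negated weights to support subtree max), where the left parenthesis $(_\ell$ of every leaf $\ell$ is weighted by the starting position of its corresponding suffix and all other BP-weights are set to $\pm\infty$, so that subtree min/max collapses to min/max over leaf values.

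For the factor $f_k$ currently being built from position $p+1$, I maintain an incremental locus $v$ of the suffix $S[p+1..i]$ in the current $\STree(S[1..i])$, together with a pointer to the leaf $\ell^\star$ whose value is $p+1$. The key invariant is that any occurrence of $S[p+1..i]$ in $S[1..i]$ must begin at some position $j \le p+1$ (because $j+(i-p)-1 \le i$), so every leaf in the subtree rooted at $v$ carries a value at most $p+1$ and $\ell^\star$ is always present there. On receiving $S[i+1]$ I first syntactically try to walk $v$ one character forward by $S[i+1]$ in the current $\STree(S[1..i])$: if the walk succeeds, then $S[p+1..i+1]$ already occurs in $S[1..i]$ and any such occurrence necessarily begins at position $\le p$, so the factor extends; otherwise the factor terminates with $f_k = S[p+1..i]$, and its rightmost previous occurrence begins at the second-largest leaf value in the subtree at $v$, obtained as the maximum of two subtree-max queries on the two BP-subranges flanking the adjacent pair $(_{\ell^\star},)_{\ell^\star}$ inside $[(_v,)_v]$, read off via the leaf pointers of the BP-linked tree.

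After the test, Ukkonen processes $S[i+1]$ in amortized $O(\log\sigma)$ time, incurring $O(1)$ amortized BP edits of cost $O(\log n/\log\log n)$ each on the SmQ structure via Lemma~\ref{lem:smq}; the locus $v$ is then advanced one edge-step (if extending) or re-initialized to the locus of $S[i+1]$ by a single child lookup from the root (if a factor just terminated), also in amortized $O(\log\sigma)$ by the usual skip/count argument. Summing gives amortized $O(\log\sigma + \log n/\log\log n)$ per character in $O(n)$ space. The main obstacle is the interaction with Ukkonen's own edge-splits: the extension test and the termination query must be issued \emph{before} the Ukkonen step so that no leaf of value $\ge p+2$ has yet been inserted into $v$'s subtree, and the cursor at $v$ must then be reconciled with any rule-2 splits that refine the edge hosting it, which amounts to showing that such splits leave the spelled-out character position of $v$ unchanged.
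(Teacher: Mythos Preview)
There is a genuine gap: the leaf $\ell^\star$ with value $p+1$ that you rely on does not exist. In Ukkonen's implicit suffix tree $\STree(S[1..i])$, a suffix $S[j..i]$ is represented by a leaf exactly when it does not occur at any position smaller than $j$; but while the factor is being extended, the candidate $S[p+1..i]$ is by definition a repeating suffix (it has an occurrence starting in $S[1..p]$), so it is \emph{never} a leaf. The parentheses of $\ell^\star$ that you want to flank are therefore absent, and the ``second-largest leaf'' query cannot be executed as described. Concretely, for $S=\mathtt{abcacabcacad}$ the sixth LZ factor is $f_6=\mathtt{bcaca}$ starting at position $7$, so $p=6$ and your termination query is posed in $\STree(S[1..11])$; there the only leaf under the locus of $\mathtt{bcaca}$ has value $2$, and no leaf carries the value $p+1=7$. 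Even the obvious patch of taking the plain subtree maximum is not clearly sound, because before the terminating character is absorbed one can have $|\lrs(S[1..i])|>|f_k|$ (the same example gives $|\lrs(S[1..11])|=6>5$), meaning several occurrences of $f_k$ need not be leaves of $\STree(S[1..i])$; you would still have to argue that the \emph{rightmost} previous occurrence is always among the leaves, which the proposal does not do.

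The paper avoids all of this by querying \emph{after} Ukkonen has processed the terminating character, i.e., in $\STree(S[1..i+1])$. The termination criterion is precisely $|\lrs(S[1..i+1])|\le|f_k|$, which forces every occurrence of $f_k$ at a position $\le p+1$ to be a genuine leaf; the rightmost previous occurrence is then recovered with $O(1)$ RMQs. Your stated motivation for querying before the Ukkonen step---to prevent leaves of value $\ge p+2$ from entering the subtree---is not a real obstacle: at most the positions $p+1$ and $p+2$ can appear under the locus of $f_k$, and they are handled within the same $O(1)$ query budget.
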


\begin{proof}
  We use a standard technique with Ukkonen's online suffix tree construction
  with Corollary~\ref{coro:Ukkonen}.
  Suppose we have computed the first $k-1$ factors $f_{1}, \ldots, f_{k-1}$,
  and that we have built $\BPSTree(S[1..i])$ where $i = |f_1 \cdots f_{k-1}|+1$ is the beginning position of the next factor $f_k$.
  If $S[i]$ is a fresh character, then clearly $f_k = S[i]$.
  Otherwise, we perform the following.
  We grow the BP-liked suffix tree while reading subsequent characters $S[i+\ell-1]$ for increasing $\ell = 2, 3, \ldots$
  until we find the smallest
  $\ell^\star \ge 2$ such that $|\lrs(S[1.. i+\ell^\star-1])| < \ell^\star$~(see Fig.~\ref{fig:onlineLZ}).
  When we find such $\ell^\star$, it turns out that $f_k = S[i.. i+\ell^\star-2]$
  since $S[i.. i+\ell^\star-2]$ has a previous occurrence beginning at some position in $S[1.. i-1]$
  and $S[i.. i+\ell^\star-1]$ does not.
  Now, we search for the rightmost previous occurrence of $f_k$ by using $\BPSTree(S[1.. i+\ell^\star-1])$.
  Since $|\lrs(S[1.. i+\ell^\star-1])| \le \ell^\star-1 = |f_k|$,
  all the occurrences of $f_k$ are represented by leaves or the \emph{active point}
  that is the locus corresponding to the longest repeating suffix.
  Thus the rightmost previous occurrence of $f_k$ can be obtained by querying RMQs $O(1)$ times for the leaves under the locus of $f_k$.
  The above procedures for $f_k$ can be done in $O(|f_k| \log\sigma + \log n/\log \log n)$ time
  except for the time for maintaining the BP-linked suffix trees that takes $O(1)$ amortized time per character.
\end{proof}
\begin{figure}[t]
  \centering  
  \includegraphics[width=0.8\linewidth]{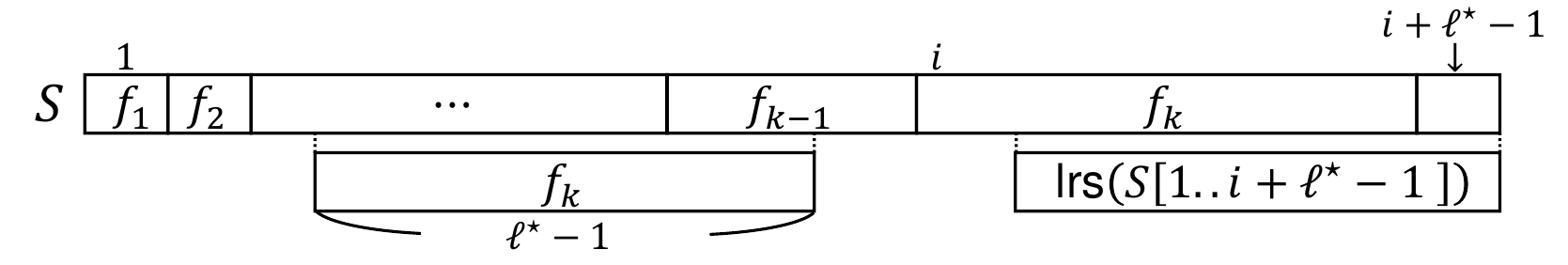}
  \caption{Illustration for Theorem~\ref{theo:online_LZ}.
  }\label{fig:onlineLZ}
\end{figure}

\subsection{Sliding rightmost LZ}
In this subsection, we present our algorithm for Problem (3).
\begin{theorem}[Sliding rightmost LZ] \label{thm:sliding_rightmostLZ}
  For an online string of length $n$ over a general ordered alphabet of size $\sigma$ and a fixed window size $d$, one can compute the sliding window rightmost LZ-factorization in amortized $O(\log \sigma + \log d / \log \log d)$ time per character, using $O(d)$ total space.
\end{theorem}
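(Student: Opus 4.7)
The plan is to follow the strategy of Theorem~\ref{theo:online_LZ}, replacing Ukkonen's online suffix tree construction with a sliding suffix tree algorithm such as Larsson's~\cite{Larsson96}, Senft's~\cite{Senft2005}, or Leonard et al.'s~\cite{leonard2024constanttime}. Such an algorithm maintains the suffix tree of the current window of length $d$ in $O(d)$ words of space and amortized $O(\log \sigma)$ time per character, handling both an insertion of a new suffix at the right end and a deletion of an old suffix at the left end through a constant amortized number of tree-editing operations (leaf insertion, edge split, leaf deletion, and absorption of a non-branching parent).

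On top of this sliding suffix tree I would augment Lemma~\ref{lem:bptree}'s BP-linked tree, incurring only $O(1)$ additional worst-case time per editing operation. To locate the \emph{rightmost} previous occurrence within the window, I would maintain the dynamic SmQ data structure of Lemma~\ref{lem:smq} with the weight of each leaf set to the starting position of its corresponding suffix and the weight of each internal node set to $-\infty$, and use the MAX variant (symmetric to the MIN variant) so that a query returns the leaf with the largest starting position in the queried subtree. Since the tree has $O(d)$ nodes throughout, each editing operation and each query of the SmQ costs worst-case $O(\log d / \log \log d)$ time.

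For each new character, I would then replay the procedure of Theorem~\ref{theo:online_LZ}: grow the BP-linked sliding suffix tree character by character (simultaneously sliding the window by inserting new and removing expired suffixes) until the smallest $\ell^\star \geq 2$ is found for which $|\lrs(\cdot)| < \ell^\star$, emit $f_k$ of length $\ell^\star - 1$, and issue an SmQ at the locus of $f_k$ to recover the rightmost previous occurrence beginning inside $W_k$. Summed over all characters, the suffix tree maintenance contributes $O(\log \sigma)$ amortized per character and the SmQ maintenance plus $O(1)$ queries per factor contributes $O(\log d / \log \log d)$ amortized per character.

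The main obstacle is correctness of the deletions: when an expiring suffix is removed, its leaf (and possibly its now-non-branching parent) must disappear from the BP-linked tree and the SmQ in a way that keeps the set of leaves exactly equal to the set of window-starting positions; any stray leaf would cause an SmQ to report an out-of-window occurrence. Lemma~\ref{lem:bptree} handles both editing cases in $O(1)$, and Lemma~\ref{lem:smq} handles the corresponding weight removals in $O(\log d / \log \log d)$, so the invariant is preserved provided the chosen sliding suffix tree algorithm exposes the loci of the affected nodes (which all three cited algorithms do). A secondary care is that when the active point still represents the longest repeating suffix, its window-start position may need to be recovered separately as in the proof of Theorem~\ref{theo:online_LZ}, which can be done via an $O(1)$ additional SmQ at the locus. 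This yields the claimed amortized $O(\log \sigma + \log d / \log \log d)$ time per character within $O(d)$ total space.
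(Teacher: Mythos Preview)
Your plan has a genuine gap concerning factors that are long relative to the window size. In the sliding LZ factorization, a factor $g_k$ can have length much larger than $d$ (even $\Theta(n)$), because its previous occurrence need only \emph{begin} inside $W_k$ and may overlap $g_k$ itself. Your proposal offers no mechanism to cope with this. If, as your wording ``simultaneously sliding the window by inserting new and removing expired suffixes'' suggests, you keep the window at size $d$ by deleting from the left while extending to the right, then after a few characters of $g_k$ the leftmost positions of $W_k$ have been evicted from the tree; a previous occurrence of the growing prefix that starts there becomes invisible, and you will terminate $g_k$ prematurely (and report a wrong reference). If instead you freeze the left endpoint at $i-d$ and only extend to the right until $\ell^\star$ is found, correctness is restored but the extended window $S[i-d..i+\ell-1]$ can reach length $\Theta(n)$, destroying the $O(d)$ space bound.

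The paper's proof resolves exactly this tension with an extra ingredient you are missing: it keeps the left endpoint at $i-d$ but caps the right extension at $2d$ characters, so the suffix tree is built only on a window of length at most $3d$. If no $\ell^\star$ is found by then, it observes that the length-$2d$ suffix $g' = S[i..i+2d-1]$ is a repeating suffix of a string of length $3d$, hence has a period $p \le d \le |g'|/2$; the true factor $g_k$ is then the longest right-extension of $g'$ with period $p$, which is found by naive character comparisons in $O(|g_k|)$ time using only the $3d$-window plus one incoming character at a time. Without this periodicity argument (or something equivalent), your scheme cannot simultaneously achieve correctness and $O(d)$ space.
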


\begin{proof}
  We use a similar strategy to the case of online rightmost LZ-factorization from Theorem~\ref{theo:online_LZ},
  with a variant of Corollary~\ref{coro:Ukkonen} using a sliding suffix tree algorithm (cf.~\cite{Larsson96,Senft2005,leonard2024constanttime}).
  Suppose that we have computed the first $k-1$ factors $g_{1}, \ldots, g_{k-1}$, and
  that we have maintained $\BPSTree(W_{i})$ where $W_{i} = S[i-d.. i-1]$ is the current window of width $d$.
  If $S[i]$ does not occur in $W_i$,
  then clearly $g_k = S[i]$.
  Otherwise, as in Theorem~\ref{theo:online_LZ},
  we grow the BP-liked suffix tree while reading subsequent characters $S[i+\ell-1]$ for increasing $\ell = 2, 3, \ldots, 2d$
  until the value $\ell$ reaches $2d$ or
  we find the smallest
  $\ell^\star \ge 2$
  such that $|\lrs(S[i-d.. i+\ell^\star-1])| < \ell^\star$.
  If such $\ell^\star$ is found, then $g_k = S[i.. i+\ell^\star-2]$ and
  we can retrieve the rightmost previous occurrence of $g_k$ as in Theorem~\ref{theo:online_LZ}.
  Otherwise, $\ell = 2d$ and $|\lrs(S[i-d.. i+2d-1])| \ge 2d$ hold, and we then stop growing the suffix tree.
  Let $g' = S[i.. i+2d-1]$ be the length-$2d$ suffix of the extended window $S[i-d.. i+2d-1]$.
  Let $p$ be the difference between the beginning positions of the occurrence of $\lrs(S[i-d.. i+2d-1])$ as suffix and its (arbitrary) previous occurrence.
  Now $p \le d$ holds since $\lrs(S[i-d.. i+2d-1]) \ge 2d$.
  Then, $g'$ also appears $p$ positions to the left, i.e., at position $i-p$, and thus, $p$ is a period of $g'$ and $p \le |g'|/2$.
  The longest right-extension of $g'$ with period $p$ is $g_k$~(see Fig.~\ref{fig:slidingLZ}).
  Such extension can be computed in $O(|g_k|)$ time
  {with $O(d)$ space
  }by naive character comparisons in $S$
  {as follows:
    for incremental $j = 0, 1, 2, \ldots$,
    we compare character $S[i + 2d + j]$ to $S[i + (j \mod p)]$
    instead of $S[i + 2d + j - p]$
    until a mismatch is found.
    By doing this, no matter how large $j$ becomes,
    every character comparison is possible by retaining only
    the extended window $S[i-d.. i+2d-1]$ of size $3d$ and a single character $S[i + 2d + j]$.
  }

  At each $k$th step,
  we use only $O(d)$ space
  for the BP-linked suffix tree of an extended window of length at most $3d$
  and some auxiliary $O(1)$ working space.
  While we may need to compare $\omega(d)$ characters in $S$ when $g_k$ is much longer than $2d$,
  we do not need to store the characters outside of the extended window. Thus, such character-comparisons can be done within $O(d)$ space.
  {Then, to proceed to the $(k+1)$th step,
    we move to the next window of size $d$,
    namely, the length-$d$ suffix of $S[1.. |g_1g_2\cdots g_k|]$.
  }\begin{figure}[t]
    \centering  
    \includegraphics[width=0.8\linewidth]{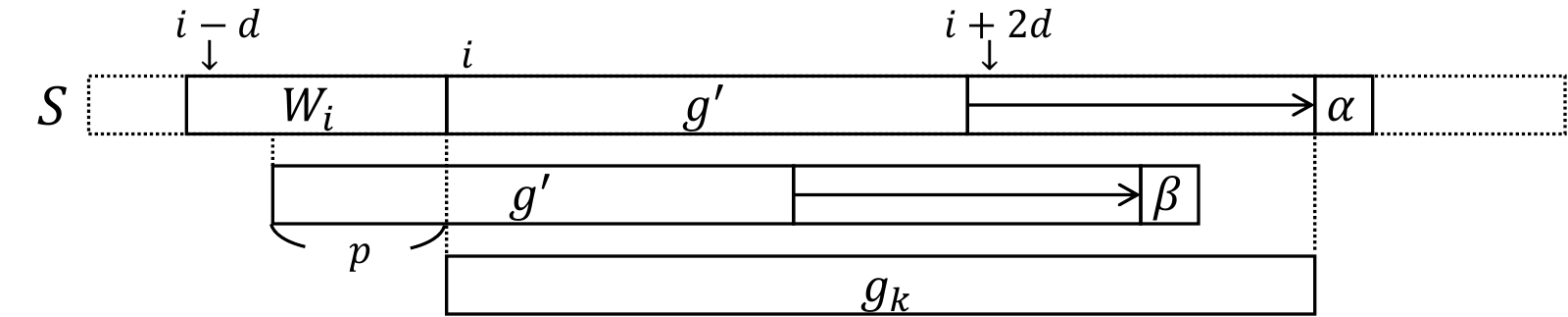}
    \caption{Illustration for Theorem~\ref{thm:sliding_rightmostLZ}.
      String $g'$ of length $2d$ has period $p$.
      When $\alpha \ne \beta$ where $\beta$ is $p$ characters before $\alpha$,
      the next factor $g_k$ is determined
      since $g_k\alpha$ cannot occur before it due to the periodicity of $g_k\beta$.
    }\label{fig:slidingLZ}
  \end{figure}
\end{proof}
\section{Other applications of BP-linked suffix trees} \label{sec:applications}

In this section, we present other applications of our BP-linked (suffix) trees,
which are online computation of \emph{closed factorizations} of a given string.

\subsection{Online longest closed factorizations}
A string $w$ is \emph{closed} if $w$ is a character,
or the longest border $b$ of $w$ occurs exactly twice in $w$ as prefix and suffix~\cite{Fici11}.
The \emph{longest closed factorization}
$\LCF(S) = g_1, \ldots, g_k$ of a string $S$
is a factorization of $S$ such that 
each $g_i$ is the longest closed suffix of $S[1..|g_1 \cdots g_i|]$.
The \emph{longest closed factor array} $\LCFA$ of a string $S$ of length $n$
is an array of length $n$ such that 
$\LCFA[i]$ stores the length of the last factor of 
$\LCF({S[1..i]})$ and the size of $\LCF(S[1..i])$ for $1 \leq i \leq n$.
$\LCF(S)$ can readily be obtained from $\LCFA$ for $S$.

Alzamel et al.~\cite{AlzamelISS19} showed the following property:
\begin{lemma}[\cite{AlzamelISS19}] \label{lem:lrs_LCF}
  For a string $S$, if $g_1, \ldots, g_k = \LCF(S)$,
  then $g_k = S[i..|S|]$,
  where $i$ is the second rightmost occurrence of $\lrs(S)$ in $S$.
  Also, $\lrs(S)$ is the longest border of $g_k$.
\end{lemma}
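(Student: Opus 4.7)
The plan is to prove the lemma in two steps: first show that $w := S[i..|S|]$ is closed with longest border exactly $r := \lrs(S)$, and then show that no strictly longer suffix of $S$ can be closed, so that the last factor $g_k$ of $\LCF(S)$ must coincide with $w$.

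For the first step, I would observe that $r$ is a suffix of $S$ by definition of $\lrs$ and, because $r$ occurs at position $i$, also the length-$|r|$ prefix of $w$; hence $r$ is a border of $w$. Any strictly longer border $b$ of $w$ would be a suffix of $S$ that also occurs at position $i$ inside $S$, giving a repeating suffix longer than $\lrs(S)$, contradiction; so $r$ is the longest border of $w$. Moreover, $r$ occurs exactly twice in $w$: any hypothetical third occurrence of $r$ in $w$ would lie strictly between positions $i$ and $|S|-|r|+1$ in $S$, contradicting that $i$ is the second rightmost occurrence of $r$. Thus $w$ is closed with longest border $r$.

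For the second step, suppose for contradiction that $w' := S[j..|S|]$ is a closed suffix of $S$ with $j < i$, and let $b'$ be its longest border. I would split on the size of $|b'|$ relative to $|r|$. If $|b'| > |r|$, then $b'$ is a suffix of $S$ (as suffix of $w'$) that also occurs at position $j$ (as prefix of $w'$), yielding a repeating suffix longer than $\lrs(S)$, contradiction. If $|b'| = |r|$, then $b' = r$ because both are suffixes of $S$ of the same length; but then $r$ is a prefix of $w'$, so $r$ occurs at the three distinct positions $j$, $i$, and $|S|-|r|+1$ inside $w'$, contradicting closedness.

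The main obstacle is the remaining case $|b'| < |r|$, where $r$ itself need not be a border of $w'$ and so does not directly force a third occurrence of anything. My plan here is to exploit that $b'$ is a proper suffix of $r$ (since both are suffixes of $S$ with $|b'| < |r|$), so $b'$ appears at the right end of every occurrence of $r$. Concretely, $b'$ occurs in $w'$ at position $j$ (as the prefix of $w'$), at position $i+|r|-|b'|$ (as the length-$|b'|$ suffix of the occurrence of $r$ at position $i$), and at position $|S|-|b'|+1$ (as the suffix of $S$). Using $j < i$, $i < |S|-|r|+1$, and $|b'| < |r|$, I would verify that these three positions are pairwise distinct and all lie inside $w'$, once again contradicting that $w'$ is closed. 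Combining both steps gives $g_k = w$ with longest border $r = \lrs(S)$, as claimed.
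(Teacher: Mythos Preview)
The paper does not give its own proof of this lemma; it is quoted from Alzamel et al.\ \cite{AlzamelISS19} and used as a black box, so there is nothing in the present paper to compare your argument against.

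That said, your argument is correct. Step~1 is clean: $r$ is a border of $w$, maximality follows because a longer border would be a repeating suffix of $S$ exceeding $|\lrs(S)|$ (note $|b|<|w|$ forces $i<|S|-|b|+1$, so the two occurrences really are distinct), and the ``exactly twice'' condition is exactly the statement that $i$ is the second rightmost occurrence of $r$. In Step~2 the three cases are handled soundly; in particular the delicate case $|b'|<|r|$ works because $b'$, being a suffix of $S$ shorter than $r$, is a suffix of $r$ and hence sits at the right end of the occurrence of $r$ at position $i$, giving the third occurrence $i+|r|-|b'|$ you describe. Your distinctness checks $j<i+|r|-|b'|<|S|-|b'|+1$ all go through. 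One small edge case you do not mention explicitly: when $\lrs(S)=\varepsilon$ (the last character of $S$ is fresh) the phrase ``second rightmost occurrence of $\lrs(S)$'' is degenerate and $g_k$ is simply the single character $S[|S|]$; and in the $|b'|<|r|$ case you implicitly use $|b'|\ge 1$, which is fine since a closed string of length at least $2$ has a non-empty longest border. With those remarks, the proof is complete.
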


Alzamel et al.~\cite{AlzamelISS19} employ Ukkonen's online suffix tree and rely on RMQ on a dynamic list of leaves, for computing $\LCFA$ online.
The inputs of their RMQ is given as a pair $l,r$ of two integers representing
an interval $[l, r]$ in the sorted list of leaves in the online suffix tree,
where $l$ and $r$ are the lexicographical ranks of the leftmost and rightmost leaves
in the subtree rooted at the active point.
However, in~\cite{AlzamelISS19} the authors do not describe how to explicitly
maintain the ranks of leaves on a growing suffix tree as integers.
We remark that even a single leaf insertion to the suffix tree
can change the ranks of $\Omega(n)$ existing leaves.

However, as we have observed previously,
by the use of our online BP-linked suffix tree,
maintaining the ranks of the leaves in a growing suffix tree is no more necessary for performing RMQs under the active point.
Due to Lemma~\ref{lem:lrs_LCF},
we can use a similar strategy as in Theorem~\ref{theo:LPF_online}
by noting that the second rightmost occurrence,
which is the second leftmost occurrence in the reversed string,
can be found with a constant number of RmQs.
Thus we have:

\begin{theorem} \label{theo:LCFA_online}
  For a string $S$ of length $n$,
  there exist online algorithms which use $O(n)$ space and compute $\LCFA[i]$ for each $1 \leq i \leq n$
  \begin{itemize}
    \item[(a)] in worst-case $O(\log n / \log \log n)$ time after $o(n)$-time preprocessing for an integer alphabet of size $\sigma = n^{O(1)}$;
    \item[(b)] in amortized $O(\log \sigma + \log n / \log \log n)$ time for a general order alphabet of size $\sigma$.
  \end{itemize}
\end{theorem}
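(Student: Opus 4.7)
The plan is to adapt the proof of Theorem~\ref{theo:LPF_online} via Lemma~\ref{lem:lrs_LCF}, which reduces the task of computing the length of the last factor of $\LCF(S[1..i])$ to locating the \emph{second} rightmost occurrence of $\lrs(S[1..i])$ in $S[1..i]$. Using the identity $\lrs(S[1..i]) = \lrp(\rev{S}[n-i+1..n])$ already exploited in Theorem~\ref{theo:LPF_online}, rightmost occurrences in $S[1..i]$ correspond to leftmost occurrences in the reversed string, so I can reuse the same Weiner-type online framework: process $S$ left-to-right, and maintain $\BPSTree(\rev{S[1..i]})$ by prepending characters to the reversed string via Corollary~\ref{coro:Weiner-type}. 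This construction also returns, at each step, the locus $p_i$ of $\lrp(\rev{S[1..i]})$, which is the locus of $\lrs(S[1..i])$.

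Next, I would equip the BP-linked suffix tree with the dynamic SmQ machinery of Lemma~\ref{lem:BP_rmq}, assigning each leaf its starting position in $\rev{S[1..i]}$ as weight and each internal node weight $\infty$. One SmQ under $p_i$ returns the leftmost occurrence, which is the prefix itself. To obtain the second leftmost, I would issue a constant number of additional SmQs on the BP-range under $p_i$ with the parenthesis of the first answer excised, querying the two resulting flanking subranges. Translating the answer back to $S$, this gives the starting position $s$ of the last factor $g_k$ of $\LCF(S[1..i])$, from which $|g_k| = i - s + 1$. The size component of $\LCFA[i]$ is then $1$ plus the size component of $\LCFA[s-1]$, retrievable in $O(1)$ time from the previously stored entries of the array.

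The per-character cost is dominated by the online suffix tree update and the $O(1)$ SmQs, which together give worst-case $O(\log n / \log \log n)$ time in Case~(a) after $o(n)$-time preprocessing, and amortized $O(\log \sigma + \log n / \log \log n)$ time in Case~(b). The total space remains $O(n)$, since the explicit tree, its BP linked list, and the q*-heap-based RmQ structure of Lemma~\ref{lem:dynamic_rmq} are all linear.

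The main point of care is extracting the \emph{second} minimum rather than the first under $p_i$. This is not a real obstacle: two SmQs on the BP-range under $p_i$, split at the parenthesis of the first answer, suffice and preserve the complexity bounds. A secondary subtlety is that $p_i$ may be an implicit node on an edge, but then the occurrences under $p_i$ coincide with those under its nearest explicit descendant, so the BP-range to query is unambiguous and the infinity-weight assignment to internal nodes ensures that only true occurrences (leaves) are returned.
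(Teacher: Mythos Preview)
Your proposal is correct and follows essentially the same approach as the paper: reduce via Lemma~\ref{lem:lrs_LCF} to finding the second rightmost occurrence of $\lrs(S[1..i])$, reverse the string to turn this into the second leftmost occurrence of $\lrp$ in the Weiner-type construction (as in Theorem~\ref{theo:LPF_online}), and recover it with a constant number of RmQs/SmQs on the BP-range. You additionally spell out details the paper leaves implicit---splitting the BP-range at the first minimum to retrieve the second, the recurrence for the size component of $\LCFA[i]$, and the treatment of implicit loci---all of which are sound and do not change the complexity.
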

Our result in Theorem~\ref{theo:LCFA_online} can be seen as an online alternative to the offline solution in the literature~\cite{BadkobehBGIIIPS16}, with the same complexity.

\subsection{Online minimum closed factorizations}
The closed factorization $g_1, \ldots, g_k$ of a string $S$ 
is called the \emph{minimum closed factorization} of $S$
if the number $k$ of factors is smallest~\cite{BadkobehBGIIIPS16}.
Let $\mcf(S)$ denote the size of the minimum closed factorization of $S$.

\begin{theorem} \label{theo:MCFA_online}
  For a string $S$ of length $n$,
  there exist online algorithms which use $O(n)$ space and compute
  the minimum closed factor array $\MCFA[i] = \mcf(S[1..i])$ for each $1 \leq i \leq n$, with $\ell_i = |\lrs(S[1..i])|$,
  \begin{itemize}
    \item[(a)] in worst-case $O(\ell_i \log n / \log \log n)$ time after $o(n)$-time preprocessing for an integer alphabet of size $\sigma = n^{O(1)}$;
    \item[(b)] in amortized $O(\log \sigma + \ell_i \log n / \log \log n)$ time for a general order alphabet of size $\sigma$.
  \end{itemize}
\end{theorem}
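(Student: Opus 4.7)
The plan is to compute $\MCFA$ via dynamic programming on top of the BP-linked suffix tree of the reversed online string, reusing the machinery from Theorems~\ref{theo:LPF_online} and~\ref{theo:LCFA_online}. First I would establish the following characterization of the closed suffixes of $S[1..i]$ ending at position $i$: for each $t \in \{1, \ldots, \ell_i\}$, letting $j_t$ denote the starting position of the second rightmost occurrence of $S[i-t+1..i]$ in $S[1..i]$, the closed suffixes of $S[1..i]$ of length at least two are exactly the substrings $S[j_t..i]$, each satisfying $|\lrs(S[j_t..i])| = t$. Note that $j_t$ always exists, because $S[i-t+1..i]$ is a suffix of $\lrs(S[1..i])$ and hence repeating in $S[1..i]$.

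The step I expect to be the main obstacle is proving $|\lrs(S[j_t..i])| = t$ exactly. The inequality $|\lrs(S[j_t..i])| \ge t$ is immediate, since $S[i-t+1..i]$ occurs both as a suffix of $S[j_t..i]$ and at position $j_t$ inside it. For the matching upper bound, I would argue by contradiction: if a length-$t'$ suffix with $t' > t$ were also repeating in $S[j_t..i]$, its interior occurrence at some position $k \in [j_t, i - t']$ would force $S[k+t'-t..k+t'-1] = S[i-t+1..i]$, yielding a third occurrence of $S[i-t+1..i]$ strictly between $j_t$ and $i-t+1$ and contradicting the choice of $j_t$ as the second rightmost occurrence. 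The converse direction of the characterization follows directly from the definition of closedness: if $S[j..i]$ is closed with $|\lrs(S[j..i])| = t$, then $S[i-t+1..i]$ must occur exactly twice inside $S[j..i]$, forcing $j = j_t$.

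The characterization yields the recurrence
\[
\MCFA[i] \;=\; 1 + \min\Bigl(\MCFA[i-1],\ \min_{1 \le t \le \ell_i} \MCFA[j_t - 1]\Bigr), \qquad \MCFA[0] = 0.
\]
To implement this online, I would maintain $\BPSTree(\rev{S[1..i]})$ with Weiner-type updates (Corollary~\ref{coro:Weiner-type}) augmented with the dynamic SmQ data structure of Lemma~\ref{lem:smq}, assigning each leaf its starting position in $\rev{S[1..i]}$ (so that smaller weights correspond to more recent ending positions of occurrences in $S$) and each internal node weight $\infty$. Under this weighting, the second rightmost occurrence of $S[i-t+1..i]$ in $S[1..i]$ corresponds to the leaf with the second smallest weight in the subtree rooted at the first explicit descendant of the locus of $\rev{S[i-t+1..i]}$, which can be extracted by $O(1)$ SmQs: find the minimum leaf, then query the minima of the two BP-ranges obtained by removing it.

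Finally, for each new character $S[i]$ I would walk from the locus of $\lrp(\rev{S[1..i]})$ up along the explicit ancestor path toward the root; consecutive values of $t$ lying on the same edge share the same effective subtree, so the walk itself costs only $O(\ell_i)$. For each of the $\ell_i$ values of $t$ I perform $O(1)$ SmQs and an $O(1)$ array lookup of $\MCFA[j_t - 1]$, and take the minimum together with $\MCFA[i-1]$. This gives the claimed per-character cost of $O(\ell_i \log n / \log \log n)$ worst-case in case (a) after $o(n)$-time preprocessing, and $O(\log \sigma + \ell_i \log n / \log \log n)$ amortized in case (b), within $O(n)$ space.
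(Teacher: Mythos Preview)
Your proposal is correct and follows essentially the same approach as the paper: maintain $\BPSTree(\rev{S[1..i]})$ via Weiner-type updates, walk the $\ell_i$ (implicit/explicit) nodes on the root-to-$\lrp$ path, use $O(1)$ RmQs/SmQs per node to obtain the second leftmost occurrence in the reversed string (equivalently the second rightmost in $S$), and combine the answers by dynamic programming. Your write-up in fact makes explicit the characterization of the closed suffixes of $S[1..i]$ and the resulting recurrence $\MCFA[i] = 1 + \min\bigl(\MCFA[i-1], \min_{1\le t\le \ell_i}\MCFA[j_t-1]\bigr)$, both of which the paper leaves implicit.
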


\begin{proof}
  Consider Case (a).
  We find the locus for $\lrp(\rev{S[1..i]})$ in $\BPSTree(\rev{S[1..i]})$
  in worst-case $O(\log n / \log \log n)$ time with Corollary~\ref{coro:Weiner-type}-(a).
  Let $v_1, \ldots, v_{\ell_i}$ be the explicit/implicit nodes on the path from the root to the locus for $\lrp(\rev{S[1..i]})$.
  For each $v_j$, we perform a constant number of RmQs to find the second leftmost
  occurrence of $\str(v_j)$ with Lemma~\ref{lem:BP_rmq}
  in worst-case $O(\log n / \log \log n)$ time.
  Then, we can compute $\mcf(\rev{S[1..i]})[i]$ by dynamic programming in $O(\ell_i)$ time.

  Case (b) can be obtained with Corollary~\ref{coro:Weiner-type}-(b).
\end{proof}

Alzamel et al.~\cite{AlzamelISS19} claimed a solution with 
$O(\ell_i (\log \sigma + \log n))$ worst-case running time
for each $i$, which is based on Ukkonen's algorithm.
Although amortized, our algorithm is faster than theirs also in the case of general ordered alphabets.

\subsubsection{Acknowledgments}
This work was supported by JSPS KAKENHI Grant Numbers JP23H04381, JP24K20734~(TM)
and JP20H05964, JP23K24808, JP23K18466~(SI).
{The authors thank the anonymous referee for a pointer to reference~\cite{HenzingerK99} that introduced the Euler tour trees.
}

\end{document}